\newif\iflipics
\algnewcommand\algorithmicinput{\textbf{INPUT:}}
\algnewcommand\INPUT{\item[\algorithmicinput]}
\algnewcommand\algorithmicoutput{\textbf{OUTPUT:}}
\algnewcommand\OUTPUT{\item[\algorithmicoutput]}
\newcommand{\problem}[1]{\textsc{#1}\xspace}
\newtheorem{lemma}{Lemma}
\title{Improved Analysis of Highest-Degree Branching for Feedback Vertex Set}
\titlerunning{Improved Analysis of Highest-Degree Branching for Feedback Vertex Set}
\author{Yoichi Iwata}{National Institute of Informatics, Japan}{yiwata@nii.ac.jp}{}{Supported by JSPS KAKENHI Grant Number JP17K12643}
\author{Yusuke Kobayashi}{Kyoto University, Japan}{yusuke@kurims.kyoto-u.ac.jp}{}{Supported by JSPS KAKENHI Grant Numbers JP16K16010, 17K19960, and 18H05291}
\authorrunning{Y. Iwata and Y. Kobayashi}
\keywords{Feedback Vertex Set, Branch and bound, Measure and conquer}
\title{Improved Analysis of Highest-Degree Branching\\ for Feedback Vertex Set}
\author{Yoichi Iwata\thanks{Supported by JSPS KAKENHI Grant Number JP17K12643}\\
	National Institute of Informatics, Japan\\
	\texttt{yiwata@nii.ac.jp}
\and
    Yusuke Kobayashi\thanks{Supported by JSPS KAKENHI Grant Numbers JP16K16010, 17K19960, and 18H05291}\\
    Kyoto University, Japan\\
    \texttt{yusuke@kurims.kyoto-u.ac.jp}
}
\date{}
\begin{document}
\maketitle

\begin{abstract}
Recent empirical evaluations of exact algorithms for \problem{Feedback Vertex Set} have demonstrated the efficiency of a highest-degree branching algorithm with a degree-based pruning heuristic.
In this paper, we prove that this empirically fast algorithm runs in $O(3.460^k n)$ time, where $k$ is the solution size.
This improves the previous best $O(3.619^k n)$-time deterministic algorithm obtained by Kociumaka and Pilipczuk.
\end{abstract}

\iflipics
\newpage
\fi

\section{Introduction}
\problem{Feedback Vertex Set (FVS)} is a classical NP-hard graph optimization problem of finding the minimum-size vertex deletion set to make the input graph a forest.
It is known that this problem is \emph{fixed-parameter tractable (FPT)} parameterized by the solution size $k$; i.e., we can find a deletion set of size $k$ in $O^*(f(k))$\footnote{$O^*(\cdot)$ hides factors polynomial in $n$. Note that for \problem{FVS}, any $O(f(k) n^{O(1)})$-time FPT algorithms can be improved to $O(f(k)k^{O(1)}+k^{O(1)}n)$ time by applying a linear-time kernel~\cite{conf/icalp/Iwata17} as a preprocess.
We can therefore focus only on the $f(k)$ factor when comparing the running time.} time for some function $f$.
FVS is one of the most comprehensively studied problems in the field of parameterized algorithms, and various FPT algorithms using different approaches have been developed, including short-cycle branching~\cite{conf/dagstuhl/DowneyF92}, highest-degree branching~\cite{conf/soda/Cao18}, iterative-compression branching~\cite{journals/jcss/ChenFLLV08,journals/algorithmica/CaoC015,journals/ipl/KociumakaP14}, LP-guided branching~\cite{journals/corr/Wahlstrom13,conf/focs/IwataYY18}, cut-and-count dynamic programming~\cite{conf/focs/CyganNPPRW11}, and random sampling~\cite{journals/jair/BeckerBG00}.

The current fastest deterministic FPT algorithm for \problem{FVS} is a branching algorithm combined with the iterative compression technique~\cite{journals/ipl/KociumakaP14} which runs in $O^*(3.619^k)$ time.
When allowing randomization, the current fastest one is a cut-and-count dynamic programming algorithm~\cite{conf/focs/CyganNPPRW11} which runs in $O^*(3^k)$ time.
In this paper, we give a faster deterministic algorithm which runs in $O^*(3.460^k)$ time.
As explained below, this study is strongly motivated by Parameterized Algorithms and Computational Experiments (PACE) challenge and its follow-up empirical evaluation by Kiljan and Pilipczuk~\cite{conf/wea/KiljanP18}.
Instead of designing a new theoretically fast algorithm, we analyze the theoretical worst-case running time of the empirically fast algorithm that has been developed through the PACE challenge and the empirical evaluation, and we show that this algorithm is not only empirically fast but also theoretically fast.

PACE challenge is an annual programming challenge started in 2016.
Due to its importance in the field, \problem{FVS} was selected as the subject of track B in the first PACE challenge~\cite{conf/iwpec/DellHJKKR16}.
Although in theoretical studies, the current fastest algorithm is the randomized cut-and-count dynamic programming, the result of the challenge suggests that branching is the best choice in practice.
This is not so surprising; because the theoretical analysis of branching algorithms is difficult, the proved upper bound of the running time is rather pessimistic.

Among seven submissions to the PACE challenge, top six submissions used branching algorithms;
the first used the LP-guided branching;
the second and third used branching on highest-degree vertices;
the fourth and sixth used branching combined with iterative compression;
and the fifth used branching on short cycles.
In addition to the pruning by the LP lower bound, the first-place solver by Imanishi and Iwata~\cite{github/Iwata16} used the following \emph{degree-based pruning} heuristic:
\begin{lemma}[\cite{github/Iwata16}]\label{lem:pruning-orig}
Given a set of undeletable vertices $F\subseteq V$, let $v_1,v_2,\ldots$ be the vertices of $V(G)\setminus F$ in the non-decreasing order of the degrees $d(v_i)$ in $G$.
If $|E(G)|-\sum_{i=1}^k d(v_i)\geq |V(G)|-k$ holds, there is no feedback vertex set $S\subseteq V(G)\setminus F$ of size $k$.
\end{lemma}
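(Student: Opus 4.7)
The plan is to prove the contrapositive: suppose $S\subseteq V(G)\setminus F$ is a feedback vertex set of size $k$, and derive $|E(G)|-\sum_{i=1}^k d(v_i)<|V(G)|-k$. The starting point is the only graph-theoretic fact the argument rests on, namely that $G-S$ is a forest on $|V(G)|-k$ vertices, so
\[
|E(G-S)|\leq |V(G)|-k-1.
\]

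The second step is a standard degree double-count to relate $|E(G-S)|$ to $\sum_{v\in S}d(v)$. Each edge of $G$ with both endpoints in $S$ contributes twice to $\sum_{v\in S}d(v)$ while each edge with exactly one endpoint in $S$ contributes once, hence
\[
|E(G-S)|=|E(G)|-\sum_{v\in S}d(v)+|E(G[S])|\geq |E(G)|-\sum_{v\in S}d(v).
\]
Combining with the forest bound yields $|E(G)|-\sum_{v\in S}d(v)\leq |V(G)|-k-1<|V(G)|-k$.

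The final step is to bound $\sum_{v\in S}d(v)$ in terms of $\sum_{i=1}^k d(v_i)$ using the chosen ordering on $V(G)\setminus F$. For the conclusion to follow, $v_1,\ldots,v_k$ must be the $k$ vertices of $V(G)\setminus F$ of \emph{largest} degree, so that $\sum_{i=1}^k d(v_i)$ is the maximum possible degree sum over $k$-subsets of $V(G)\setminus F$; under that reading of the ordering convention one has $\sum_{v\in S}d(v)\leq \sum_{i=1}^k d(v_i)$, and substitution gives the desired strict inequality, contradicting the hypothesis.

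I expect no real obstacle: the argument reduces to the forest edge bound plus one double-count. The only delicate point worth flagging is the direction of the ordering of $v_1,v_2,\ldots$, since the whole pruning rule hinges on $\sum_{i=1}^k d(v_i)$ being an \emph{upper}, rather than a lower, bound on $\sum_{v\in S}d(v)$ for every $k$-subset $S$ of $V(G)\setminus F$.
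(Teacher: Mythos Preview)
The paper does not actually prove this lemma; it is only cited from the Imanishi--Iwata solver, so there is no in-paper proof to compare against. Your argument---forest edge bound on $G-S$, the double-count $|E(G-S)|=|E(G)|-\sum_{v\in S}d(v)+|E(G[S])|$, and then the extremal bound on $\sum_{v\in S}d(v)$---is the standard one and is correct. It is also exactly the mechanism the paper uses for the closely related Lemma~\ref{lem:pruning}.

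Your flag about the ordering is well taken and is in fact a typo in the statement: with ``non-decreasing'' the quantity $\sum_{i=1}^k d(v_i)$ would be the \emph{minimum} degree sum over $k$-subsets of $V(G)\setminus F$, and your final substitution would go the wrong way. The intended convention is non-increasing, which the paper itself confirms a few lines later when it restates the same ordering in Lemma~3 explicitly as ``non-increasing'' (and uses $d(v_1)=D$). With that reading, your proof is complete.
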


The follow-up empirical evaluation~\cite{conf/wea/KiljanP18} shows that the use of the degree-based pruning is much more important than the choice of branching rules.
By combining with the degree-based pruning, the performances of the LP-guided branching~\cite{journals/corr/Wahlstrom13}, the highest-degree branching~\cite{conf/soda/Cao18}, and the iterative-compression branching~\cite{journals/ipl/KociumakaP14}, are all significantly improved, and among them, the highest-degree branching slightly outperforms the others.
Cao~\cite{conf/soda/Cao18} showed that one can stop the highest-degree branching at depth $3k$ by using a degree-based argument, and therefore the running time is $O(8^k)$.
On the other hand, the theoretically proved running time of other branching algorithms (without the degree-based pruning) are, $O^*(4^k)$ for the LP-guided branching~\cite{journals/corr/Wahlstrom13} and $O^*(3.619^k)$ for the iterative-compression branching~\cite{journals/ipl/KociumakaP14}.
These affairs motivated us to refine the analysis of the highest-degree branching with the degree-based pruning.

In our analysis, instead of bounding the depth of the search tree as Cao~\cite{conf/soda/Cao18} did, we design a new measure to bound the size of the search tree.
The measure is initially at most $k$ and we show that the measure drops by some amount for each branching.
In contrast to the standard analysis of branching algorithms, our measure has a negative term and thus can have negative values; however, we show that we can immediately apply the degree-based pruning for all such cases.
A simple analysis already leads to an $O^*(4^k)$-time upper bound which significantly improves the $O^*(8^k)$-time upper bound obtained by Cao~\cite{conf/soda/Cao18}.
We then apply the measure-and-conquer analysis~\cite{journals/jacm/FominGK09} and improve the upper bound to $O^*(3.460^k)$.

\subsection{Organization}
Section~\ref{sec:alg} describes the highest-degree branching algorithm with the degree-based pruning.
In Section~\ref{sec:analysis}, we analyze the running time of the algorithm.
We first give a simple analysis in Section~\ref{sec:simple} and then give a measure-and-conquer analysis in Section~\ref{sec:mc}.
While the correctness of the simple analysis can be easily checked, we need to evaluate thousands of inequalities to check the correctness of the measure-and-conquer analysis.
For convenience, we attach a source code of the program to evaluate the inequalities in Appendix~\ref{sec:program}.
The same source code is also available at \url{https://github.com/wata-orz/FVS_analysis}.

\section{Algorithm}\label{sec:alg}

An input to the algorithm is a tuple $(G,F,k)$ of a multi-graph $G=(V,E)$, a set of undeletable vertices $F\subseteq V$, and an integer $k$.
Our task is to find a subset of vertices $S\subseteq V\setminus F$ such that $|S|\leq k$ and $G[V\setminus S]$ contains no cycles.
Note that a double edge is also considered as a cycle.
We denote by $N(u)$ the multiset of the adjacent vertices of $u$ and define $d(u):=|N(u)|$.
For convenience, we use $D$ to denote $\max_{v\in V\setminus F}d(v)$.

Our algorithm uses the standard reduction rules listed below in the given order (i.e., rule $i$ is applied only when none of the rules $j$ with $j<i$ are applicable).
All of these reductions are also used in the empirical evaluation by Kiljan and Pilipczuk~\cite{conf/wea/KiljanP18}.
If none of the reductions are applicable, we apply a pruning rule, and if it cannot be pruned, we apply a branching rule.

\paragraph*{Reduction Rule 1.}
If there exists a vertex $u$ of degree at most one, delete $u$.

\paragraph*{Reduction Rule 2.}
If there exists a vertex $u\not\in F$ such that $G[F\cup\{u\}]$ contains a cycle, delete $u$ and decrease $k$ by one.

\paragraph*{Reduction Rule 3.}
If there exists a vertex $u$ of degree two, delete $u$ and add an edge connecting its two endpoints.

\paragraph*{Reduction Rule 4.}
If there exists an edge $e$ of multiplicity more than two, reduce its multiplicity to two.

\paragraph*{Reduction Rule 5.}
If there exists a vertex $u\not\in F$ incident to a double edge $uw$ with $d(w)\leq 3$, delete $u$ and decrease $k$ by one.

\paragraph*{Reduction Rule 6.}
If $D\leq 3$, solve the problem in polynomial time by a reduction to the matroid matching~\cite{journals/algorithmica/CaoC015}.

\paragraph*{Pruning Rule.}
If $k<0$ or $kD-\sum_{v\in F}(d(v)-2)<0$, return NO.

\paragraph*{Branching Rule.}
Pick a vertex $u\in V\setminus F$ of the highest degree $D$.
Let $U:=N(u)\cap F$ and let $G'$ be the graph obtained by contracting $U\cup\{u\}$ into a single vertex $u'$.
We branch into two cases: $(G-u,F,k-1)$ and $(G',F-U+u',k)$.

\paragraph*{}
The correctness of the first four reduction rules is trivial. We can prove the correctness of the reduction rule 5 as follows.
Because there is a double edge $uw$, any feedback vertex set must contain at least one of $u$ and $w$.
Because $w$ has at most one edge other than the double edge $uw$, every cycle containing $w$ also contains $u$.
Therefore, there always exists a minimum feedback vertex set containing $u$.

After applying the reductions, the following conditions hold.
\begin{enumerate}
    \item $G$ has the minimum degree at least three.
    \item No double edges are incident to $F$.
    \item For any vertex $v\not\in F$, $G-v$ has the minimum degree at least two.
    \item $D\geq 4$.
\end{enumerate}

The correctness of the pruning rule follows from the following lemma.
\begin{lemma}\label{lem:pruning}
	If the minimum degree of $G$ is at least two and $kD-\sum_{v\in F}(d(v)-2)<0$ holds, there is no feedback vertex set $S\subseteq V\setminus F$ of size at most $k$.
\end{lemma}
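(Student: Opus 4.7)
The plan is to prove the contrapositive: assume a feedback vertex set $S \subseteq V\setminus F$ with $|S| \leq k$ exists, and then deduce $kD \geq \sum_{v\in F}(d(v)-2)$. The central tool will be a handshake-lemma computation applied to the forest $T := G[V\setminus S]$.

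First, since $T$ is a forest on $|V\setminus S|$ vertices, it has at most $|V\setminus S| - 1$ edges, so the handshake identity gives $\sum_{v\in V\setminus S}(d_T(v) - 2) = 2(|E(T)| - |V\setminus S|) \leq 0$. I will then substitute $d_T(v) = d(v) - m(v,S)$, where $m(v,S)$ denotes the number of edges from $v$ into $S$ counted with multiplicity, and rearrange to obtain $\sum_{v\in V\setminus S}(d(v) - 2) \leq \sum_{v\in V\setminus S} m(v,S)$.

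The right-hand side is the total multiplicity of edges between $V\setminus S$ and $S$; counting from the $S$ side, it is at most $\sum_{v\in S} d(v) \leq |S|\cdot D \leq kD$, where the second inequality uses $S \subseteq V\setminus F$ and the definition of $D$. On the left-hand side, the minimum-degree-$\geq 2$ hypothesis makes every term with $v \in V\setminus(F\cup S)$ nonnegative, so I may drop those and arrive at $\sum_{v\in F}(d(v)-2) \leq kD$, contradicting the assumption $kD - \sum_{v\in F}(d(v)-2) < 0$. I do not expect a serious obstacle; the only point demanding care is bookkeeping with edge multiplicities, but since a double edge is itself a cycle and thus cannot survive in $T$, the forest degree-counting identities apply without modification.
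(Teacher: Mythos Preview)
Your proposal is correct and follows essentially the same approach as the paper: both arguments assume a feedback vertex set $S\subseteq V\setminus F$ of size at most $k$, bound $\sum_{v\in S}d(v)\leq kD$, use the minimum-degree-$\geq 2$ hypothesis to pass from $\sum_{v\in F}(d(v)-2)$ to $\sum_{v\in V\setminus S}(d(v)-2)$, and close with the handshake/forest inequality on $G[V\setminus S]$. The paper packages the same steps as a single chain of inequalities, while you unpack the cut-edge count $m(v,S)$ explicitly, but the underlying argument is identical.
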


\begin{proof}
	Suppose that there is a feedback vertex set $S\subseteq V\setminus F$ of size at most $k$. We have
	\begin{align*}
		kD-\sum_{v\in F}(d(v)-2)&\geq \sum_{v\in S}d(v)-\sum_{v\in V\setminus S}(d(v)-2)\\
		&= \sum_{v\in S}d(v)-\sum_{v\in V\setminus S}d(v)+2|V\setminus S|\\
		&= 2|E(S)|-2|E(V\setminus S)|+2|V\setminus S|\\
		&\geq -2|E(V\setminus S)|+2|V\setminus S|.
	\end{align*}
	Because $G-S$ is a forest, this must be non-negative.
\end{proof}

Note that this pruning is different from the degree-based pruning (Lemma~\ref{lem:pruning-orig}) used in the Imanishi-Iwata solver~\cite{github/Iwata16} and the empirical evaluation~\cite{conf/wea/KiljanP18};
however, as the following lemma shows, if this pruning is applied, then the original degree-based pruning is also applied.
Therefore, we can use the same analysis against the original degree-based pruning.
We use this weaker version because it is sufficient for our analysis.
We leave whether the stronger version helps further improve the analysis as future work.

\begin{lemma}
For a subset $F\subseteq V$, let $v_1,v_2,\ldots$ be the vertices of $V\setminus F$ in the non-increasing order of the degrees $d(v_i)$ in $G$.
If the minimum degree of $G$ is at least two, $kd(v_1)-\sum_{v\in F}(d(v)-2)<0$ implies $|E|-\sum_{i=1}^k d(v_i)\geq |V|-k$.
\end{lemma}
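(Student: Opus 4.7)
The plan is to multiply the target inequality by $2$ and invoke the handshake identity $2|E| = \sum_{v \in F} d(v) + \sum_{v \in V \setminus F} d(v)$ to rewrite $|E| - \sum_{i=1}^k d(v_i) \geq |V| - k$ in the equivalent form
\[
\sum_{v \in F}(d(v) - 2) + \sum_{v \in V \setminus F} d(v) - 2 \sum_{i=1}^k d(v_i) \geq 2(|V \setminus F| - k).
\]
This separation is the key reformulation: the $F$-part is precisely the quantity appearing in the hypothesis, while the $V \setminus F$-part is controlled by the minimum-degree-at-least-two assumption.

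Assuming first that $|V \setminus F| \geq k$, I would expand $\sum_{v \in V \setminus F} d(v) = \sum_{i=1}^{|V \setminus F|} d(v_i)$ and peel off the first $k$ terms, converting the middle expression into $\sum_{i > k} d(v_i) - \sum_{i=1}^{k} d(v_i)$. Since each remaining vertex has degree at least two, $\sum_{i > k} d(v_i) \geq 2(|V \setminus F| - k)$, which cancels the right-hand side exactly. The goal therefore collapses to
\[
\sum_{i=1}^k d(v_i) \leq \sum_{v \in F}(d(v) - 2),
\]
which is immediate: the non-increasing ordering gives $\sum_{i=1}^k d(v_i) \leq k \, d(v_1)$, and the hypothesis $k \, d(v_1) - \sum_{v \in F}(d(v) - 2) < 0$ supplies the required strict upper bound.

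The degenerate range $|V \setminus F| < k$ requires only a small adjustment: $\sum_{i=1}^k d(v_i)$ collapses to $\sum_{v \in V \setminus F} d(v)$ and the right-hand side $2(|V \setminus F| - k)$ is non-positive, so it suffices to show $\sum_{v \in F}(d(v) - 2) \geq \sum_{v \in V \setminus F} d(v)$. The chain $\sum_{v \in V \setminus F} d(v) \leq |V \setminus F| \, d(v_1) \leq k \, d(v_1) < \sum_{v \in F}(d(v) - 2)$ handles this, and the corner case $V \setminus F = \emptyset$ is immediate from $d(v) \geq 2$ for every $v \in F$.

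The main obstacle is purely bookkeeping: keeping the index ranges of the two sums of $d(v_i)$ aligned when the first $k$ terms are peeled off, and disposing of the $|V \setminus F| < k$ case without ambiguity in the definition of $d(v_1)$. There is no substantive difficulty, because the hypothesis has been tailored precisely so that after applying the minimum-degree bound to the tail, the two sides of the reformulated inequality match term for term.
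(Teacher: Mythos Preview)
Your proof is correct and follows essentially the same approach as the paper: double the target inequality, invoke the handshake identity, apply the minimum-degree bound to the tail, and finish with $\sum_{i=1}^k d(v_i)\le k\,d(v_1)$ together with the hypothesis. The only organizational difference is that the paper partitions $V$ into $S=\{v_1,\ldots,v_k\}$ and $V\setminus S$ rather than into $F$ and $V\setminus F$, which lets the min-degree bound absorb both the tail of $V\setminus F$ and the $F$-contribution in a single step and thereby avoids your case distinction on $|V\setminus F|\gtrless k$.
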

\begin{proof}
Let $S=\{v_1,\ldots,v_k\}$ and assume that $kd(v_1)-\sum_{v\in F}(d(v)-2)<0$.
Then, we have
\begin{align*}
2\left(|V|-k-|E|+\sum_{v\in S}d(v)\right)&=\sum_{v\in S}d(v)-\left(2|E|-\sum_{v\in S}d(v)-2(|V|-k)\right)\\
&=\sum_{v\in S}d(v)-\sum_{v\in V\setminus S}(d(v)-2)\\
&\leq kd(v_1) - \sum_{v\in F}(d(v)-2) < 0.\qedhere
\end{align*}
\end{proof}

\section{Analysis}\label{sec:analysis}

For parameters $0\leq\alpha\leq 1$ and $\beta_d$ satisfying $0=\beta_0=\beta_1=\beta_2\leq\beta_3\leq\beta_4\leq \cdots$, we define
\[
	\mu(G,F,k):=k-\frac{\alpha}{D}\sum_{v\in F}(d(v)-2)+\sum_{v\in F}\beta_{d(v)}.
\]
Initially, we have $\mu(G,\emptyset, k)=k$.

\begin{lemma}
	After the pruning, we have $\mu(G,F,k)\geq 0$.
\end{lemma}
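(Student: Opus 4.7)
The plan is to observe that the statement is an almost immediate consequence of the pruning rule together with the nonnegativity of the $\beta_d$ weights. Surviving the pruning rule gives two facts: $k \geq 0$ and $kD - \sum_{v \in F}(d(v)-2) \geq 0$, i.e.\ $\sum_{v \in F}(d(v)-2) \leq kD$. The assumed ordering $0 = \beta_0 = \beta_1 = \beta_2 \leq \beta_3 \leq \cdots$ also ensures $\sum_{v \in F}\beta_{d(v)} \geq 0$.

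I would carry out the argument in three short steps. First, drop the nonnegative $\beta$-term from $\mu$ to obtain the lower bound $\mu(G,F,k) \geq k - \frac{\alpha}{D}\sum_{v \in F}(d(v)-2)$. Second, substitute the pruning inequality $\sum_{v \in F}(d(v)-2) \leq kD$ into this expression, which yields $\mu(G,F,k) \geq k - \frac{\alpha}{D} \cdot kD = (1-\alpha)k$. Third, use $0 \leq \alpha \leq 1$ and $k \geq 0$ to conclude $\mu(G,F,k) \geq 0$.

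There is no real obstacle here: the definition of $\mu$ has essentially been calibrated so that the pruning rule coincides exactly with $\mu$ hitting zero (up to the nonnegative $\beta$-correction). The only thing one must be careful about is the sign of the coefficient $\alpha/D$ when using the pruning bound, and the fact that $\beta_{d(v)} \geq 0$ for every possible degree $d(v) \in \{0,1,2,3,\dots\}$, both of which follow directly from the parameter conditions stated at the start of Section~\ref{sec:analysis}.
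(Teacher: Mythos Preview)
Your argument is correct and is essentially the same as the paper's: the paper simply packages your three steps into the single algebraic identity $\mu(G,F,k)=(1-\alpha)k+\frac{\alpha}{D}\bigl(kD-\sum_{v\in F}(d(v)-2)\bigr)+\sum_{v\in F}\beta_{d(v)}$, exhibiting $\mu$ as a sum of three terms that are nonnegative after pruning for exactly the reasons you give.
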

\begin{proof}
\[
		\mu(G,F,k)=(1-\alpha)k+\frac{\alpha}{D}\left(kD-\sum_{v\in F}(d(v)-2)\right)+\sum_{v\in F}\beta_{d(v)}\geq 0.\qedhere
\]
\end{proof}

We now show that applying the reduction rules does not increase $\mu$.
We can easily see that $D$ never increases by the reduction but may decrease; however, because such decrease leads to a smaller $\mu$, we can analyze as if $D$ does not change by the reduction.
Because the reduction rule 3 deletes a vertex of degree two and does not change the degrees of other vertices, it does not change $\mu$.
Because the reduction rule 4 is applied only when the reduction rule 2 cannot be applied, it does not change the degrees of vertices in $F$, and therefore it does not change $\mu$.
Because the graph immediately after branching has the minimum degree at least two, we apply the reduction rule 1 only after applying the reduction rules 2 or 5.

\begin{lemma}
The reduction rule 2 or 5, together with the subsequent applications of the reduction rule 1, does not increase $\mu$.
\end{lemma}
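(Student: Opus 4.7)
The plan is to decompose $\Delta\mu$ along the three terms defining $\mu$ and bound each separately. The $k$-term drops by exactly one, since rule 2 or rule 5 decrements $k$. The $\beta$-term changes non-positively: removing a vertex $v$ from $F$ loses $\beta_{d(v)}\geq 0$, and decreasing $d(v)$ for a surviving $v\in F$ can only decrease $\beta_{d(v)}$ by the monotonicity hypothesis on $(\beta_d)$. Hence it suffices to bound the change in $-\frac{\alpha}{D}\sum_{v\in F}(d(v)-2)$ by at most $+1$; since $\alpha\leq 1$, this reduces to showing that $\sum_{v\in F}(d(v)-2)$ drops by at most $D$.

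Let $V'$ be the set of all vertices deleted in this combined phase, $F' := F\cap V'$, and write $d'$ for post-phase degrees. A direct expansion gives
\[
\sum_{v\in F}(d(v)-2) - \sum_{v\in F\setminus F'}(d'(v)-2) \;=\; \sum_{v\in F'}d(v) + e(F\setminus F',V') - 2|F'|,
\]
where $e(\cdot,\cdot)$ counts edges between two vertex sets with multiplicity. The task reduces to bounding this right-hand side by $D$.

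The structural input to exploit is that rules 1, 3, and 4 are no longer applicable at the moment rule 2 or rule 5 fires, so $G$ has minimum degree $\geq 3$ and multiplicities $\leq 2$. Therefore, after $u$ is deleted, single-edge neighbors of $u$ only lose one unit of degree and remain of degree $\geq 2$; the rule-1 cascade can reach only $u$'s double-edge neighbors $w$ with $d(w)\leq 3$, and deeper vertices only if they lose at least two units of degree through the cascade. A short case analysis on the invocation (rule 5 on $u$; rule-2 triggered by a double edge $uw$ with $w\in F$; rule-2 triggered by two simple edges from $u$ into $F$ joined by a path in $F$), combined with the amortized observation that each cascaded vertex $v\in V'\setminus\{u\}$ has at most one edge out of $V'$ (since its current degree at the moment of its rule-1 deletion is at most $1$), will yield $\sum_{v\in F'}d(v)+e(F\setminus F',V')-2|F'|\leq d(u)\leq D$ as required.

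The main obstacle will be the rule-2-case with a triggering double edge $uw$ and $w\in F$ with $d(w)\leq 3$, so that $w$ cascades into $F'$: the term $\sum_{v\in F'}d(v)-2|F'|$ then picks up the non-negative contribution $d(w)-2\leq 1$ from $w$, and this must be carefully matched against $w$'s single remaining edge without double-counting the double edge $uw$ itself, which already contributes $2$ to $d(u)$. The key combinatorial identity to verify in every case is that each cascaded vertex absorbs at least $2$ units of $d(u)$ (directly via a double edge to $u$, or transitively through its cascaded predecessors) while contributing at most one external edge into $F$, keeping the total bounded by $d(u)\leq D$.
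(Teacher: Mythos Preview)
Your decomposition of $\Delta\mu$ into the $k$-term, the $\beta$-term, and the $\frac{\alpha}{D}$-weighted degree term is correct, and the reduction to the combinatorial inequality
\[
\sum_{v\in F'}(d_0(v)-2)+e(F\setminus F',V')\;\le\;d_0(u)\;\le\;D
\]
is exactly the right target. The amortized observation you isolate---each cascaded vertex $v\in V'\setminus\{u\}$ has at most one edge leaving $V'$---is also the correct key idea.

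There is, however, a genuine structural error in your setup. When rule~2 fires, only rule~1 is guaranteed inapplicable, so the minimum degree is only $\ge 2$, not $\ge 3$, and rules~3 and~4 may still apply. Your cascade picture (``the rule-1 cascade can reach only $u$'s double-edge neighbors'') is therefore false in the rule-2 case: a degree-$2$ neighbour of $u$ joined by a single edge will drop to degree~$1$ and enter the cascade. This invalidates the case analysis you outline and the ``main obstacle'' discussion built on it.

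The good news is that the case analysis is unnecessary. Using only $d_0(v)\ge 2$ for all $v$ (which \emph{is} guaranteed), your amortized observation gives the bound directly: each $v\in V'\setminus\{u\}$ is deleted with current degree $\le 1$, so it has $\ge d_0(v)-1$ edges into the already-deleted set, hence $|E(G[V'])|\ge\sum_{v\in V'\setminus\{u\}}(d_0(v)-1)$. Then
\[
e(V\setminus V',V')=\sum_{v\in V'}d_0(v)-2|E(G[V'])|\le d_0(u)-\sum_{v\in V'\setminus\{u\}}(d_0(v)-2),
\]
and since $u\notin F$ gives $F'\subseteq V'\setminus\{u\}$ with all summands $d_0(v)-2\ge 0$, and $F\setminus V'\subseteq V\setminus V'$, the desired inequality follows. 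No case split on how rule~2 was triggered is needed.

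By contrast, the paper's proof is a three-line charging argument: it views the whole phase as a sequence of edge deletions, notes that deleting an edge $vw$ with $v\notin F$, $w\in F$ raises $\mu$ by at most $1/D$, asserts there are at most $d(u)\le D$ such deletions, and offsets this against the unit drop in $k$. This is far terser than your route, but it sweeps the same cascade bookkeeping (in particular what happens when cascaded vertices lie in $F$) under the rug; your explicit inequality makes that bookkeeping visible.
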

\begin{proof}
By deleting an edge $vw$ such that $v\not\in F$ and $w\in F$, $\mu$ increases by at most
\[
    \frac{\alpha}{D}-(\beta_{d(w)}-\beta_{d(w)-1})\leq \frac{\alpha}{D}\leq \frac{1}{D}.
\]
In the series of the reductions, the number of such deletion is at most $d(u)\leq D$.
Because $k$ decreases by one, the increase of $\mu$ is at most $-1+D\times \frac{1}{D}\leq 0$.
\end{proof}

Finally, we analyze the branching rule.
Let $f:=|U|$ and $\mathbf{d}:=\{d_1,\ldots,d_f\}$ be the multiset of degrees of vertices in $U$.
The degree of $u'$ is $d':=D+\sum_{i=1}^f (d_i-2)$.
In the former case of the branching, we have
\begin{align*}
    \Delta_1(D,\mathbf{d})&:=\mu(G,F,k)-\mu(G-u,F,k-1)\\
    &=1-f\frac{\alpha}{D}+\sum_{i=1}^{f}(\beta_{d_i}-\beta_{d_i-1}).
\end{align*}
In the latter case, we have
\begin{align*}
    \Delta_2(D,\mathbf{d})&:=\mu(G,F,k)-\mu(G',F-U+u',k)\\
    &=-\frac{\alpha}{D}\left(\sum_{i=1}^f (d_i-2)-(d'-2)\right)+\sum_{i=1}^f \beta_{d_i}-\beta_{d'}\\
    &=\alpha-2\frac{\alpha}{D}+\sum_{i=1}^f \beta_{d_i}-\beta_{d'}.
\end{align*}

If $c^{-\Delta_1(D,\mathbf{d})}+c^{-\Delta_2(D,\mathbf{d})}\leq 1$ holds for any $(D,\mathbf{d})$ for some $c>1$, the running time of the algorithm is bounded by $O^*(c^k)$.
We now optimize the parameters to minimize $c$.

\subsection{Simple Analysis}\label{sec:simple}

As a simple analysis whose correctness can be easily checked, we use $\alpha=\log_4\frac{8}{3}\approx 0.7075$, $\beta_d=\frac{1}{2}\log_4 \frac{3}{2}\approx 0.1462$ for all $d\geq 3$, and $c=4$.
Note that $D\geq 4$ holds.
For these parameters, we have
\begin{align*}
c^{-\Delta_1(D,\mathbf{d})}&\leq 4^{-1+\frac{f}{D}\log_4\frac{8}{3}}=\frac{1}{4}\cdot\left(\frac{8}{3}\right)^{\frac{f}{D}}\leq\frac{1}{4}\min\left(\frac{8}{3},\left(\frac{8}{3}\right)^{\frac{f}{4}}\right),\\
c^{-\Delta_2(D,\mathbf{d})}&=4^{\left(\frac{2}{D}-1\right)\log_4\frac{8}{3}+(1-f)\frac{1}{2}\log_4\frac{3}{2}}\leq \left(\frac{3}{8}\right)^{\frac{1}{2}}\cdot \left(\frac{3}{2}\right)^{\frac{1-f}{2}}.
\end{align*}

We now show that $c^{-\Delta_1(D,\mathbf{d})}+c^{-\Delta_2(D,\mathbf{d})}\leq 1$ holds by the following case analysis.

\begin{align*}
c^{-\Delta_1(D,\mathbf{d})}+c^{-\Delta_2(D,\mathbf{d})}\leq\begin{dcases}
\frac{1}{4}+\left(\frac{3}{8}\right)^{\frac{1}{2}}\cdot\left(\frac{3}{2}\right)^\frac{1}{2}=1&(f=0),\\
\frac{1}{4}\cdot \left(\frac{8}{3}\right)^\frac{1}{4}+\left(\frac{3}{8}\right)^{\frac{1}{2}}<0.932&(f=1),\\
\frac{1}{4}\cdot \left(\frac{8}{3}\right)^\frac{2}{4}+\left(\frac{3}{8}\right)^{\frac{1}{2}}\cdot\left(\frac{2}{3}\right)^{\frac{1}{2}}<0.909&(f=2),\\
\frac{1}{4}\cdot \left(\frac{8}{3}\right)^\frac{3}{4}+\left(\frac{3}{8}\right)^{\frac{1}{2}}\cdot\frac{2}{3}<0.930&(f=3),\\
\frac{1}{4}\cdot\frac{8}{3}+\left(\frac{3}{8}\right)^{\frac{1}{2}}\cdot\left(\frac{2}{3}\right)^{\frac{3}{2}}=1&(f\geq 4).
\end{dcases}
\end{align*}

\subsection{Measure-and-Conquer Analysis}\label{sec:mc}
We use the parameters $\alpha=0.922863$, $\beta$ shown in Table~\ref{tab:beta}, and $c=3.460$.

\begin{table}[ht]
    \centering
    \begin{tabular}{c|c}
$d$ & $\beta_d$ \\
\hline
1 & 0.000000\\
2 & 0.000000\\
3 & 0.114038\\
4 & 0.186479\\
5 & 0.238143\\
6 & 0.277239\\
7 & 0.308030\\
8 & 0.332974\\
9 & 0.353536\\
10 & 0.370540\\
\end{tabular}
\begin{tabular}{c|c}
$d$ & $\beta_d$ \\
\hline
11 & 0.384771\\
12 & 0.396884\\
13 & 0.408715\\
14 & 0.418855\\
15 & 0.427643\\
16 & 0.435333\\
17 & 0.442118\\
18 & 0.448149\\
19 & 0.453544\\
20 & 0.458401\\
\end{tabular}
\begin{tabular}{c|c}
$d$ & $\beta_d$ \\
\hline
21 & 0.462794\\
22 & 0.466788\\
23 & 0.470435\\
24 & 0.473778\\
25 & 0.476853\\
26 & 0.479691\\
27 & 0.482320\\
28 & 0.484760\\
29 & 0.487032\\
$\geq 30$ & 0.489153\\
    \end{tabular}
    \caption{The values of $\beta$.}
    \label{tab:beta}
\end{table}

\begin{lemma}\label{lem:mc}
$c^{-\Delta_1(D,\mathbf{d})}+c^{-\Delta_2(D,\mathbf{d})}\leq 1$ holds for any $(D,\mathbf{d})$ with $D \ge 4$.
\end{lemma}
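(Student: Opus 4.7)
My plan is to reduce the verification of the inequality for all valid $(D,\mathbf{d})$ to a finite arithmetic check. Writing $\Delta_1$ and $\Delta_2$ explicitly, the inequality depends on $D \geq 4$, $f := |\mathbf{d}| \in \{0, 1, \ldots, D\}$, each $d_i \geq 3$ (enforced by the reduction rules, since after Reduction~1 the graph has minimum degree at least three), and the derived quantity $d' = D + \sum_{i=1}^f (d_i - 2)$. Since each of these can a priori grow without bound, the first step is to argue that only finitely many configurations need to be examined; the resulting finite set is then enumerated by the program in Appendix~\ref{sec:program}.

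The key to the finite reduction is that the chosen table saturates at $d = 30$: $\beta_d$ is constant for $d \geq 30$, so the marginal increment $\beta_d - \beta_{d-1}$ vanishes for $d \geq 31$. This has two consequences. First, for any $d_i \geq 31$ the contribution $\beta_{d_i}-\beta_{d_i-1}$ to $\Delta_1$ is zero, while the contribution $\beta_{d_i}$ to $\sum \beta_{d_i}$ equals $\beta_{30}$; moreover, once $d' \geq 30$ the term $\beta_{d'}$ is also just $\beta_{30}$. Hence each $d_i$ may be restricted to the finite alphabet $\{3, 4, \ldots, 30\} \cup \{\infty\}$, where $\infty$ is a symbolic value representing ``any degree at least $31$'', and one verifies that among all choices in this alphabet with the same $\sum\beta_{d_i}$ and the same $d'$-bucket (below $30$ versus at least $30$), the worst case is attained at a concrete representative. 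Second, for $D$ exceeding an explicit cutoff, $d' \geq 30$ holds regardless of the $d_i$'s, and the remaining $D$-dependence is only through $-f\alpha/D$ and $-2\alpha/D$, both of which push $\Delta_1,\Delta_2$ upward and hence shrink the left-hand side; so the adversarial case in this regime is at the cutoff. Combining these, it suffices to enumerate $(D,\mathbf{d})$ with $D$ in a small bounded range and each $d_i$ in the finite alphabet, giving a multiset count on the order of thousands.

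For each such $(D,\mathbf{d})$, evaluating $c^{-\Delta_1}$ and $c^{-\Delta_2}$ with the tabulated $\alpha$, $\beta$, and $c = 3.460$ is a direct arithmetic computation, and verifying that the sum is at most $1$ amounts to a finite-precision numerical check. The main obstacle is the rigorous bookkeeping of the finite reduction in the boundary regime $d' < 30$: there $\beta_{d'}$ is still strictly increasing in its argument and interacts with perturbations of the $d_i$'s, so one cannot cap the $d_i$'s in complete isolation from $D$. This boundary is handled exhaustively, since $d' < 30$ forces $D + \sum(d_i-2) < 30$ and therefore pins down a tiny subregion; the appendix program simply enumerates all such tuples alongside the saturated ones. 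Once the enumeration is exhibited as exhaustive, the lemma reduces to checking the finite list of inequalities produced by the program.
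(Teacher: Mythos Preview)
Your reduction of the $d_i$'s to a bounded range via the saturation of $\beta$ at $d=30$ is correct and matches the paper's first step. However, your plan for bounding $D$ has a genuine gap. You argue that for $D$ past a cutoff (necessarily $D\ge 30$, since that is what forces $d'\ge 30$ regardless of $\mathbf d$), the left-hand side is monotone in $D$ with $\mathbf d$ held fixed, so the worst case sits at the cutoff. But this comparison requires $(D_{\mathrm{cutoff}},\mathbf d)$ to be a valid configuration, i.e.\ $f=|\mathbf d|\le D_{\mathrm{cutoff}}$; since $f$ may be as large as $D$, the case $f>D_{\mathrm{cutoff}}$ is simply not covered. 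Even ignoring that, for $D\in\{4,\ldots,30\}$ you would still have to enumerate all multisets $\mathbf d$ of size up to $D$ over a $\sim 29$-letter alphabet, and the number of such multisets is nowhere near ``on the order of thousands'' (already for $f=10$ it exceeds $10^8$). Your appeal to a ``concrete representative'' within each $(\sum\beta_{d_i},\ d'\text{-bucket})$ class does not cut this down either, because $\Delta_1$ and $\Delta_2$ depend on $\mathbf d$ through different statistics ($\sum_i(\beta_{d_i}-\beta_{d_i-1})$ versus $\sum_i\beta_{d_i}$ and $d'$), so the adversarial choices for the two terms pull in opposite directions.

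The paper sidesteps both problems by induction on $D$. Every instance $(D,\mathbf d)$ with $D\ge 5$ is reduced to some $(D-1,\mathbf d^\ast)$: when $f<D$ one keeps $\mathbf d^\ast=\mathbf d$ and verifies the scalar inequality $-2\alpha/D-\beta_{d'}+2\alpha/(D-1)+\beta_{d'-1}\ge 0$; when $f=D$ one drops one coordinate of $\mathbf d$ and verifies $\beta_{d_f}-\beta_{d'}+\beta_{d'-d_f+2}\ge 0$. These two families are indexed by a single pair of integers in $\{3,\ldots,31\}$, hence trivially finite, and the only exhaustive multiset enumeration occurs at the base case $D=4$, where $f\le 4$. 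That inductive device is precisely what brings the total check down to the scale handled by the appendix program, and it is the missing ingredient in your outline.
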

\begin{proof}
Suppose that $d_j\geq 32$ holds for some $j$.
Because $\beta_d=\beta_{30}$ for all $d\geq 30$ and because $d'=D+\sum_i (d_i-2)\geq D+(d_j-2)\geq 31$ holds, decreasing $d_j$ by one does not change $\Delta_1(D,\mathbf{d})$ nor $\Delta_2(D,\mathbf{d})$.
Therefore, we can focus on the case of $d_i\leq 31$ for all $i$.
We now show that the inequality holds by induction on $D$.
When $D=4$, we can verify that
\begin{align}
    c^{-\Delta_1(4,\mathbf{d})}+c^{-\Delta_2(4,\mathbf{d})}\leq 1 \enspace (\forall \mathbf{d})\label{eq:mc1}
\end{align}
holds by naively enumerating all the possible configurations of $\mathbf{d}$.
Assume that, for a fixed $D$, the inequality holds for any $(D-1,\mathbf{d})$.
We show that the inequality also holds for any $(D,\mathbf{d})$.

When $f<D$, we have
\[
    \Delta_1(D,\mathbf{d})\geq\Delta_1(D-1,\mathbf{d})
\]
and
\[
    \Delta_2(D,\mathbf{d})=\Delta_2(D-1,\mathbf{d})-2\frac{\alpha}{D}-\beta_{d'}+2\frac{\alpha}{D-1}+\beta_{d'-1},
\]
where $d'=D+\sum_{i=1}^f(d_i-2)$.
We can verify that our parameters satisfy
\begin{align}
    -2\frac{\alpha}{D}-\beta_{d'}+2\frac{\alpha}{D-1}+\beta_{d'-1}\geq 0 \enspace (\forall (D, d') \text{ with } 5\leq D\leq d'\leq 31).\label{eq:mc2}
\end{align}
Therefore, we have $\Delta_2(D,\mathbf{d})\geq\Delta_2(D-1,\mathbf{d})$.
This shows that 
\begin{align*}
    c^{-\Delta_1(D,\mathbf{d})}+c^{-\Delta_2(D,\mathbf{d})}\leq c^{-\Delta_1(D-1,\mathbf{d})}+c^{-\Delta_2(D-1,\mathbf{d})}\leq 1. 
\end{align*}

When $f=D$, let $\mathbf{d}':=\{d_1,\ldots,d_{f-1}\}$.
We have
\[
    \Delta_1(D,\mathbf{d})=\Delta_1(D-1,\mathbf{d}')+\beta_{d_f}-\beta_{d_{f-1}}\geq\Delta_1(D-1,\mathbf{d}')
\]
and
\begin{align*}
    \Delta_2(D,\mathbf{d})&=\Delta_2(D-1,\mathbf{d}')-2\frac{\alpha}{D}+2\frac{\alpha}{D-1}+\beta_{d_f}-\beta_{d'}+\beta_{d'-d_f+2}\\
    &\geq\Delta_2(D-1,\mathbf{d}')+\beta_{d_f}-\beta_{d'}+\beta_{d'-d_f+2}.
\end{align*}
Because we can verify that our parameters satisfy
\begin{align}
    \beta_{d_f}-\beta_{d'}+\beta_{d'-d_f+2}\geq 0 \enspace (\forall (d',d_f) \text{ with } 3\leq d_f<d'\leq 31),\label{eq:mc3}
\end{align}
we have $\Delta_2(D,\mathbf{d})\geq\Delta_2(D-1,\mathbf{d}')$.
This shows that 
\begin{align*}
    c^{-\Delta_1(D,\mathbf{d})}+c^{-\Delta_2(D,\mathbf{d})}\leq c^{-\Delta_1(D-1,\mathbf{d}')}+c^{-\Delta_2(D-1,\mathbf{d}')}\leq 1.
\end{align*}
\end{proof}

\paragraph*{Acknowledgements.}
We would like to thank Yixin Cao for valuable discussions and thank organizers of PACE challenge 2016 for motivating us to study \problem{FVS}.

\bibliographystyle{plainurl}
\bibliography{main}

\appendix
\section{Program to check Lemma~\ref{lem:mc}}\label{sec:program}
We attach a source code of a python3 program to evaluate the inequalities (\ref{eq:mc1})--(\ref{eq:mc3}) appeared in the proof of Lemma~\ref{lem:mc}.
The same source code is also available at \url{https://github.com/wata-orz/FVS_analysis}.

\lstinputlisting[language=Python,showstringspaces=false,columns=fullflexible]{check.py}

\end{document}